\newcommand{\eqdef}{:=}
\renewcommand{\vec}[1]{\bm{#1}}		
\newcommand{\rvec}[1]{\mathbbm{#1}} 		
\newcommand{\E}{\mathsf{E}}		
\newcommand{\Var}{\mathsf{V}}			
\newcommand{\stdset}[1]{\mathbbmss{#1}}	
\newcommand{\set}[1]{\mathcal{#1}}		
\newcommand{\CN}{\mathcal{CN}}			
\newcommand{\herm}{\mathsf{H}}			
\newtheorem{proposition}{Proposition}
\newtheorem{remark}{Remark}
\begin{document}

\title{Two-timescale weighted sum-rate maximization for large cellular and cell-free massive MIMO 
}

\author{
\IEEEauthorblockN{Lorenzo Miretti$^{1}$, Emil Björnson$^2$, Sławomir Sta\'nczak$^{1}$}
\IEEEauthorblockA{$^1$\emph{Technische Universität Berlin} and \emph{Fraunhofer Institute for Telecommunications Heinrich-Hertz-Institut}, Berlin, Germany \\
\emph{$^2$KTH Royal Institute of Technology}, Stockholm, Sweden\\
miretti@tu-berlin.de, emilbjo@kth.se, slawomir.stanczak@hhi.fraunhofer.de}
}

\def\baselinestretch{.99}
\setlength{\belowdisplayskip}{1pt}
\setlength{\belowdisplayshortskip}{1pt}
\setlength{\abovedisplayskip}{1pt}
\setlength{\abovedisplayshortskip}{1pt}

\maketitle
\IEEEpubid{\begin{minipage}{\textwidth}\ \\[12pt] \centering
  This work has been submitted to the IEEE for possible publication.\\ Copyright may be transferred without notice, after which this version may no longer be accessible.
\end{minipage}}

\begin{abstract} 
We reconsider the problem of joint power control and beamforming design to maximize the weighted sum rate in large and potentially cell-free massive MIMO networks. In contrast to the available \textit{short-term} methods, where an iterative algorithm is run for every instantaneous channel realization, we derive an iterative algorithm that can be run only sporadically leveraging known channel statistics, with minor performance loss. In addition, our algorithm also applies to the design of non-trivial cooperative beamforming schemes subject to limited sharing of instantaneous channel state information. Furthermore, our algorithm generalizes and outperforms the competing \textit{long-term} methods from the massive MIMO literature, which are restricted to long-term power control only or to long-term joint power control and large-scale fading decoding design.  
\end{abstract}
\begin{IEEEkeywords}
power control, WMMSE, team MMSE, distributed beamforming, XL-MIMO
\end{IEEEkeywords}

\section{Introduction} 
\IEEEpubidadjcol
This study introduces and addresses a novel two-timescale formulation of the classical weighted sum-rate maximization problem in multi-user multiple-input multiple-output (MU-MIMO) wireless systems. By focusing on the uplink of a system with $K$ single-antenna users and a base station equipped with $M$ antennas, and by assuming standard linear receiver processing and single-user coding schemes that treat interference as noise, this classical problem is commonly formulated as
\begin{equation}\label{eq:sumrate}
\begin{aligned}
		\underset{\substack{\vec{p}\in \stdset{R}_{+}^K \\ \vec{v}_1,\ldots,\vec{v}_K \in \stdset{C}^{M}}}{\text{maximize}}
		& \quad\sum_{k\in \set{K}} \omega_k\log(1+\mathsf{SINR}^{\mathsf{inst}}_k(\vec{H},\vec{v}_k,\vec{p})) \\
\text{subject to} & \quad (\forall k\in \set{K})~p_k \leq P,
\end{aligned}
\end{equation}
where $\mathcal{K}=\{1,\ldots,K\}$ is the set of user indexes, $(\omega_1,\ldots,\omega_K)\in \stdset{R}_+^K$ and $P\in \stdset{R}_+$ are problem parameters, and $\mathsf{SINR}^{\mathsf{inst}}_k$ denotes the instanteous SINR of user $k$ for given MU-MIMO channel matrix $\vec{H}\in \stdset{C}^{K\times M}$, power control coefficients $\vec{p}=(p_1,\ldots,p_K)\in \stdset{R}_+^K$, and receive beamformers $\vec{v}_1,\ldots,\vec{v}_K \in \stdset{C}^M$. Although computing an optimal solution to Problem~\eqref{eq:sumrate} is  notoriously challenging, efficient stationary solutions can be obtained using the many iterative algorithms developed in the last two decades \cite{christensen2008wmmse,shi2011wmmse,negro2012sum,shen2018fractional,zhang2023mm}.   These methods apply also to the many variants of \eqref{eq:sumrate} covering, e.g., downlink operations and imperfect channel state information (CSI). 

The main working assumption of this study, which is also ubiquitous in the cellular and cell-free massive MIMO literature \cite{marzetta2016fundamentals,massivemimobook,ngo2017cell,demir2021foundations}, is that the above algorithms for addressing \eqref{eq:sumrate} do not scale well to modern wireless systems, such as those building on the cell-free \cite{ngo2017cell,demir2021foundations} or ultra-large massive MIMO concept \cite{debbah2023xl}. The first challenge is the excessive computational complexity associated with running an iterative algorithm \textit{for every channel realization}. The second challenge is the excessive control signaling overhead for collecting the necessary algorithm input $\vec{H}$ at a central processing unit and for distributing the output $\vec{p}, \vec{v}_1,\ldots,\vec{v}_K$ to the appropriate network entities, \textit{for every channel realization}. Importantly, note that all the above operations must be completed under strict latency constraints, since the optimized powers $\vec{p}$ must be conveyed to the users within the channel coherence time. 
\IEEEpubidadjcol

\subsection{Two-timescale formulation} To mitigate the abovementioned scalability issues of the  approach in \eqref{eq:sumrate}, which we refer to as the \textit{short-term} approach, one possibility is to replace the  instantaneous rate expressions in the objective with  ergodic rate expressions of the type 
\begin{equation}\label{eq:R}
\sum_{k\in \set{K}} \omega_k\E[\log(1+\mathsf{SINR}^{\mathsf{inst}}_k(\rvec{H},\rvec{v}_k,\vec{p})],
\end{equation}
where $\rvec{H}$ is a random matrix, $\vec{p}$ is a deterministic power vector, and the beamformers $\rvec{v}_k$ are \textit{functions} of the available instantaneous CSI (hence, they are random vectors). A key benefit of this two-timescale formulation, which we refer to as the \textit{long-term} approach, is that the powers are optimized sporadically \textit{for many channel realizations}, based on relatively slowly-varying channel statistics. The same benefit applies to the optimization of the beamformers, or, more precisely, of the \textit{functions} mapping instantaneous CSI to beamforming weights. In addition, the long-term approach facilitates the introduction of functional constraints as in \cite{miretti2021team} \cite{miretti2024duality} to enforce distributed beamforming architectures with limited instantaneous CSI sharing, which are particularly relevant for cell-free systems since they allow to distribute the processing load and potentially reduce the fronthaul overhead.

However, the main disadvantage of long-term approaches based on \eqref{eq:R} is that, mostly due to the expectation outside the logarithm, they typically lead to intractable optimization problems. Hence, to derive practical long-term power control algorithms, most studies in the cellular and cell-free massive MIMO literature replace \eqref{eq:R} with the so-called use-and-then-forget (UatF) lower bound on the achievable ergodic rates \cite{marzetta2016fundamentals,massivemimobook,ngo2017cell,demir2021foundations,emil2019lsfd,tran2023maxmin}. For example, by fixing the beamformers, \cite[Sect.~3.4.2]{demir2021foundations} provides an iterative long-term power control algorithm for sum-rate maximization building on the weighted minimum mean-square error (W-MMSE) technique in \cite{shi2011wmmse}. The same technique is applied in \cite{emil2019lsfd} to perform long-term joint power control and large-scale fading decoding (LSFD) design, i.e., the design of a statistical combining stage. 
However, none of the available studies provide long-term joint power control and beamforming design algorithms for sum-rate maximization, i.e., a more general two-timescale variant of \eqref{eq:sumrate}. 

\subsection{Contributions of this study} In this study we close the above gap in the literature by providing a long-term joint power control and beamforming design algorithm for sum-rate maximization. Our algorithm extends the state-of-the-art  in \cite[Algorithm~3.4.2]{demir2021foundations} and \cite{emil2019lsfd} to include beamforming design, i.e., the optimization of the functions mapping instantaneous CSI to beamforming weights. Notably, our algorithm can be applied to both centralized and distributed beamforming architectures. Our derivation builds on the UatF bound and the W-MMSE technique \cite{shi2011wmmse} as in  \cite{demir2021foundations,emil2019lsfd},  and on a recently identified general MMSE-SINR relation for the UatF bound \cite{miretti2021team, miretti2024duality}, reminiscent of the well-known MMSE-SINR relation for \eqref{eq:R} and restricted to centralized beamforming. The distributed beamforming case is covered using the \textit{team MMSE} framework recently developed in \cite{miretti2021team}, where power control was left as an open problem. To illustrate the application of our findings, we provide an updated numerical comparison between short-term and long-term approaches, and between small-cells and cell-free networks.

\textit{Notation:} We denote by $\stdset{R}_+$ the set of nonnegative reals. The $k$th column of the $K$-dimensional identity matrix $\vec{I}_K$ is denoted by $\vec{e}_k$.  The Euclidean norm in $\stdset{C}^K$ is denoted by $\|\cdot\|$. Let $(\Omega,\Sigma,\mathbb{P})$ be a probability space. The expectation and variance operators are denoted, respectively, by $\E[\cdot]$ and $\Var(\cdot)$. We denote by $\set{H}^K$ the set of complex valued random vectors, i.e., $K$-tuples of $\Sigma$-measurable functions $\Omega \to \stdset{C}$ satisfying $(\forall \rvec{x}\in \set{H}^K)$ $\E[\|\rvec{x}\|^2]<\infty$. All equalities involving random quantities should be intended as almost sure (a.s.) equalities. 


\section{System model and problem statement}
We consider the uplink of a large-scale MU-MIMO wireless system composed by $K$ single-antenna users indexed by $\mathcal{K}:=\{1,\ldots,K\}$, and a radio access infrastructure equipped with a total number  of antennas $M$ indexed by $\set{M}:=\{1,\ldots,M\}$. The $M$ infrastructure antennas may be split among many geographically distributed access points equipped with a relatively low number of antennas as in the cell-free massive MIMO concept \cite{ngo2017cell,demir2021foundations}, or colocated as in the ultra-large massive MIMO concept \cite{debbah2023xl}. As customary in the massive MIMO literature, we assume for each time-frequency resource a standard synchronous
narrowband MIMO channel model governed by a stationary ergodic fading process, and simple transmission techniques based on linear receiver processing and on treating interference as noise  \cite{marzetta2016fundamentals,massivemimobook, caire2018ergodic}.

\subsection{Ergodic achievable rates}
For performance evaluation, we consider uplink ergodic achievable rates of the type in \eqref{eq:R}. More specifically, for each user $k \in \set{K}$, we consider the classical expression given by \cite{caire2018ergodic}
\begin{equation}\label{eq:oer}
R_k(\rvec{v}_k,\vec{p}) \eqdef \E\Bigg[\log\Bigg(1+\dfrac{p_k|\rvec{h}_k^\herm\rvec{v}_k|^2}{\underset{j\neq k}{\sum} p_j|\rvec{h}_j^\herm\rvec{v}_k|^2+\|\rvec{v}_k\|_2^2}\Bigg) \Bigg],
\end{equation}
where $\vec{p} \eqdef (p_1,\ldots,p_K) \in \stdset{R}_{+}^K$ collects the transmit powers, $\rvec{h}_k \in\set{H}^M$ is a random vector modeling the (noise normalized, without loss of generality) fading state between user $k$ and all $M$ infrastructure antennas, and $\rvec{v}_k \in \set{H}^M$ models the beamforming vector which is applied by the infrastructure to process the received signals of potentially all $M$ antennas to obtain a soft estimate of the transmit signal of user~$k$. We stress that $\rvec{v}_k$ is generally a function of the instantaneous CSI realizations, and hence it is denoted as a random vector. Since \eqref{eq:oer} is not well-defined for $\rvec{v}_k = \vec{0}$ a.s., we let $(\forall \vec{p} \in \stdset{R}_+^K)$ $R_k(\vec{0},\vec{p})\eqdef 0$. For convenience, we also define the global channel matrix $\rvec{H}\eqdef[\rvec{h}_1,\ldots,\rvec{h}_K]$.

Furthermore, similar to \cite{demir2021foundations,emil2019lsfd}, we consider for optimization purposes the more tractable UatF bound \cite{massivemimobook,caire2018ergodic} $(\forall k \in \set{K})(\forall \vec{p}\in \stdset{R}_+^K)(\forall \rvec{v}_k\in \set{H}^M)$
\begin{equation}\label{eq:uatf}
	R_k^{\mathsf{UatF}}(\rvec{v}_k,\vec{p}) \eqdef \log(1+\mathsf{SINR}_k(\rvec{v}_k,\vec{p})),
\end{equation}
where
\begin{equation*}
	\mathsf{SINR}_k(\rvec{v}_k,\vec{p}) \eqdef \resizebox{0.69\linewidth}{!}{$\dfrac{p_k|\E[\rvec{h}_k^\herm\rvec{v}_k]|^2}{p_k\Var(\rvec{h}_k^\herm\rvec{v}_k)+\underset{j\neq k}{\sum} p_j\E[|\rvec{h}_j^\herm\rvec{v}_k|^2]+\E[\|\rvec{v}_k\|^2]}$}
\end{equation*}
if $\rvec{v}_k \neq \vec{0}$, and $\mathsf{SINR}_k(\vec{0},\vec{p}) \eqdef 0$ otherwise.
\begin{remark}
The UatF bound \eqref{eq:uatf} is a  lower bound on \eqref{eq:oer}, i.e., it satisfies  \cite{caire2018ergodic} $(\forall k \in \set{K})(\forall \vec{p}\in \stdset{R}^K_+)(\forall \rvec{v}_k \in \set{H}^M)$
\begin{equation*}
~R_k^{\mathsf{UatF}}(\rvec{v}_k,\vec{p}) \leq R_k(\rvec{v}_k,\vec{p}).
\end{equation*}
\end{remark}

\subsection{Weighted sum-rate maximization}
Our target is to jointly optimize the uplink transmit powers and the receive beamformers according to a long-term variant of the weighted sum-rate criterion in \eqref{eq:sumrate}. More precisely, under consideration is the following optimization problem:
\begin{equation}\label{eq:wsr}
\begin{aligned}
\underset{\substack{\vec{p}\in \stdset{R}_{+}^K \\ \rvec{v}_1,\ldots,\rvec{v}_K \in \set{H}^{M} }}{\text{maximize}}
& \quad  \sum_{k\in\set{K}}\omega_kR_k^{\mathsf{UatF}}(\rvec{v}_k,\vec{p}) \\
\text{subject to} \; \;  & \quad (\forall k\in \set{K})~p_k \leq P, \\
 & \quad  (\forall k \in \set{K})~\rvec{v}_k \in \set{V}_k,
\end{aligned}
\end{equation}
where $(\omega_1,\ldots,\omega_K)\in \stdset{R}_{+}^K$ is a given vector of positive weights, $R_k^{\mathsf{UatF}}(\rvec{v}_k,\vec{p})$ is given by \eqref{eq:uatf}, and where $\set{V}_k \subseteq \set{H}^M$ denotes a given \textit{information} constraint which can be used to limit the dependency of each entry of $\rvec{v}_k$ to some given (and potentially different) instantaneous CSI. We point out that, if $\set{V}_k = \set{H}^M$, then $\rvec{v}_k$ can be essentially any function of $\rvec{H}$, which is not practical (perfect CSI). Additional details are given in the next section. Note that a similar problem to \eqref{eq:wsr} focusing on the max-min criterion is fully solved in \cite{miretti2022joint,miretti2023fixed}.

\begin{remark}
A well-known challenge in long-term approaches is the need to identify, estimate, and track some form of channel statistics that can be used to produce effective solutions. In contrast, short-term approaches can be often implemented based on minimal statistical information. However, the main working assumption of this study is that short-term approaches are impractical for  large-scale systems, thus making long-term approaches not really an option but rather a necessity.
\end{remark}

\subsection{Information constraints}

We model constraints related to practical centralized and distributed beamforming architectures in a unified way, following the approach introduced in \cite{miretti2021team,miretti2024duality}. We first partition the $M = NL$ infrastructure antennas into $L$ groups of $N$ antennas, and assume that the corresponding entries of $\rvec{v}_k$ are computed by $L$ separate processing units. In the context of cell-free massive MIMO, these processing units can be directly mapped to access points. We further let each user $k\in \set{K}$ be jointly served by a subset $\set{L}_k \subseteq \{1,\ldots,L\}$ of these units. Given an arbitrarily distributed tuple $(\rvec{H},S_1,\ldots,S_L)$, where $S_l$ is the instantaneous CSI at the $l$th processing unit (e.g., noisy measurements of portions of $\rvec{H}$), we then let $(\forall k \in \set{K})$
\begin{equation}\label{eq:distributedCSI}
		\set{V}_k \eqdef \set{V}_{1,k}\times \ldots \times \set{V}_{L,k}, \quad \set{V}_{l,k} \eqdef \begin{cases}
\set{H}_l^N & \text{if } l \in \set{L}_k,\\
\{\vec{0}\} & \text{otherwise},
\end{cases}
\end{equation}
where $\set{H}_l^N\subseteq \set{H}^N$ denotes the set of $N$-tuples of $\Sigma_l$-measurable functions $\Omega \to \stdset{C}^N$ satisfying $(\forall \rvec{x}\in \set{H}_l^N)$ $\E[\|\rvec{x}\|^2]<\infty$, and where $\Sigma_l \subseteq \Sigma$ is the sub-$\sigma$-algebra induced by the CSI $S_l$, which is also called the \emph{information subfield} of the $l$th processing unit. The interested reader is referred to \cite{miretti2024duality} for additional details on the above notions. However, we stress that these notions are by no means required for understanding the key results of this study. The crucial point is that, informally, the constraint set $\set{V}_k$ enforces the subvector of $\rvec{v}_k$ applied by the $l$th processing unit to be a function of $S_l$ only. Importantly, the above constraints cover all common models in the user-centric cell-free massive MIMO literature, including the case of (clustered) centralized beamforming architectures based on global CSI, and the case of (clustered) distributed beamforming architectures based on local CSI \cite{demir2021foundations}.

\section{Approximate solution}
We now derive an iterative algorithm for computing a suboptimal solution to the long-term power control and beamforming design problem \eqref{eq:wsr}. The key technical novelty is the derivation of an equivalent problem formulation which extends the approaches in \cite[Sect.~3.4.2]{demir2021foundations} and \cite{emil2019lsfd} (based on the W-MMSE technique in \cite{shi2011wmmse}) to beamforming design. Our extension is enabled  by a general MMSE-SINR relation recently identified in \cite{miretti2021team,miretti2024duality}, and reported below.
\subsection{SINR maximization via MSE minimization}
Let us consider the MSE between the transmit signal $x_k\sim \CN(0,1)$ of a user~$k\in\set{K}$ and its soft estimate $\hat{x}_k \eqdef \rvec{v}_k^\herm\rvec{y}$ obtained from the received signal $\rvec{y} \eqdef \sum_{k\in \set{K}}\sqrt{p_k}\rvec{h}_kx_k + \rvec{n}$ with noise $\rvec{n}\sim \CN(\vec{0},\vec{I})$, where the noise and all user signals are mutually independent and independent of $(\rvec{H},\rvec{v}_1,\ldots,\rvec{v}_K)$. Specifically, let $(\forall k \in \set{K})(\forall \rvec{v}_k \in \set{V}_k)(\forall \vec{p}\in \stdset{R}_+^K)$ $\mathsf{MSE}_k(\rvec{v}_k,\vec{p})\eqdef$
\begin{equation}\label{eq:MSE}
		\E[|x_k - \hat{x}_k|^2]\\
			=  \E\left[\|\vec{P}^{\frac{1}{2}}\rvec{H}^\herm\rvec{v}_k-\vec{e}_k\|_2^2\right] + \E\left[\|\rvec{v}_k\|^2\right],
\end{equation} 
	where $\vec{P}\eqdef \mathrm{diag}(\vec{p})$, and where the second equality can be verified via simple manipulations. We then have the following.
\begin{proposition}
\label{prop:MMSE-SINR}
For all $k\in\set{K}$ and $\vec{p}\in\stdset{R}_{+}^K$,
\begin{equation*}
1+ \sup_{\rvec{v}_k\in \set{V}_k} \mathsf{SINR}_k(\rvec{v}_k,\vec{p}) = \frac{1}{\inf_{\rvec{v}_k\in \set{V}_k}\mathsf{MSE}_k(\rvec{v}_k,\vec{p})}.
\end{equation*}
Furthermore, $\exists!\rvec{v}_k^\star \in \set{V}_k$ attaining $\inf_{\rvec{v}_k\in \set{V}_k}\mathsf{MSE}_k(\rvec{v}_k,\vec{p})$. Moreover, this $\rvec{v}_k^\star$ also attains $\sup_{\rvec{v}_k\in \set{V}_k} \mathsf{SINR}_k(\rvec{v}_k,\vec{p})$. 
\end{proposition}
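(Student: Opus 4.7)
The plan is to recognize $\set{H}^M$ as a Hilbert space under the inner product $\langle \rvec{u},\rvec{w}\rangle \eqdef \E[\rvec{u}^\herm\rvec{w}]$ and to exploit that, as a Cartesian product of $L^2$ spaces over sub-$\sigma$-algebras $\Sigma_l$, $\set{V}_k$ is a closed linear subspace of $\set{H}^M$. From \eqref{eq:MSE}, $\rvec{w}\mapsto \mathsf{MSE}_k(\rvec{w},\vec{p})$ is a proper, continuous, strictly convex quadratic form on $\set{H}^M$: strict convexity is ensured by the noise term $\E[\|\rvec{w}\|^2]$, whose Hessian is the identity and dominates the positive semidefinite contribution from the channel term. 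Existence and uniqueness of the MMSE minimizer $\rvec{v}_k^\star \in \set{V}_k$ then follow from the projection theorem applied to a coercive, strictly convex, continuous quadratic on a closed convex set.

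For the MMSE--SINR identity, I would use a scalar-scaling argument that leverages the linearity of $\set{V}_k$. Fix any nonzero $\rvec{v}_k \in \set{V}_k$; since $c\rvec{v}_k \in \set{V}_k$ for every $c \in \stdset{C}$, expanding \eqref{eq:MSE} in $c$ yields
\begin{equation*}
\mathsf{MSE}_k(c\rvec{v}_k,\vec{p}) = 1 - 2\,\mathrm{Re}(c\,\alpha) + |c|^2 \beta,
\end{equation*}
with $\alpha \eqdef \sqrt{p_k}\,\E[\rvec{h}_k^\herm \rvec{v}_k]$ and $\beta \eqdef \sum_{j \in \set{K}} p_j\,\E[|\rvec{h}_j^\herm \rvec{v}_k|^2] + \E[\|\rvec{v}_k\|^2]$. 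Completing the square over $c$ gives $\inf_{c\in\stdset{C}} \mathsf{MSE}_k(c\rvec{v}_k,\vec{p}) = 1 - |\alpha|^2/\beta$. The crucial algebraic step is the observation that $\beta$ equals the sum of the numerator and denominator of $\mathsf{SINR}_k(\rvec{v}_k,\vec{p})$ (using $p_k|\E[\rvec{h}_k^\herm\rvec{v}_k]|^2 + p_k\Var(\rvec{h}_k^\herm\rvec{v}_k) = p_k\E[|\rvec{h}_k^\herm\rvec{v}_k|^2]$), while $|\alpha|^2$ is precisely the numerator. Hence $\inf_{c\in\stdset{C}} \mathsf{MSE}_k(c\rvec{v}_k,\vec{p}) = 1/(1+\mathsf{SINR}_k(\rvec{v}_k,\vec{p}))$, and this relation also holds trivially at $\rvec{v}_k=\vec{0}$.

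The main identity then follows by a two-sided squeeze. On the one hand, for each $\rvec{v}_k \in \set{V}_k$,
\begin{equation*}
\inf_{\rvec{w}\in \set{V}_k} \mathsf{MSE}_k(\rvec{w},\vec{p}) \leq \inf_{c \in \stdset{C}} \mathsf{MSE}_k(c\rvec{v}_k,\vec{p}) = \frac{1}{1+\mathsf{SINR}_k(\rvec{v}_k,\vec{p})},
\end{equation*}
so taking the supremum over $\rvec{v}_k$ on the right gives $\inf \mathsf{MSE}_k(\cdot,\vec{p}) \leq 1/(1+\sup \mathsf{SINR}_k(\cdot,\vec{p}))$. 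Conversely, $\mathsf{MSE}_k(\rvec{v}_k,\vec{p}) \geq \inf_{c} \mathsf{MSE}_k(c\rvec{v}_k,\vec{p}) \geq 1/(1+\sup \mathsf{SINR}_k(\cdot,\vec{p}))$ for every $\rvec{v}_k$, and passing to the infimum on the left yields the reverse inequality. To conclude that the unique MMSE solution $\rvec{v}_k^\star$ also attains the SINR supremum, I would combine the equality $\mathsf{MSE}_k(\rvec{v}_k^\star,\vec{p}) = 1/(1+\sup \mathsf{SINR}_k(\cdot,\vec{p}))$ with the pointwise bound $\mathsf{MSE}_k(\rvec{v}_k^\star,\vec{p}) \geq 1/(1+\mathsf{SINR}_k(\rvec{v}_k^\star,\vec{p}))$, which forces $\mathsf{SINR}_k(\rvec{v}_k^\star,\vec{p}) \geq \sup \mathsf{SINR}_k(\cdot,\vec{p})$ and hence equality.

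The main obstacle I anticipate is not the one-variable algebra but the abstract functional-analytic setup: verifying that the information constraint $\set{V}_k$ in \eqref{eq:distributedCSI} is genuinely a closed linear subspace of $\set{H}^M$ for arbitrary sub-$\sigma$-algebras $\Sigma_l$, and that strict convexity and coercivity of $\mathsf{MSE}_k(\cdot,\vec{p})$ (which rely on the additive-noise term contributing $\E[\|\cdot\|^2]$) transfer cleanly to this subspace. Once those functional-analytic points are in place, the scaling invariance of $\mathsf{SINR}_k$ together with the algebraic identification of $\beta$ with the sum of SINR numerator and denominator completes the argument.
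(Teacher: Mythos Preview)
Your proposal is correct and follows essentially the same route as the paper's proof: both exploit that $\set{V}_k$ is a linear subspace closed under scalar multiplication, minimize $\mathsf{MSE}_k(c\rvec{v}_k,\vec{p})$ over the scalar $c$ to obtain $1-|\alpha|^2/\beta = (1+\mathsf{SINR}_k(\rvec{v}_k,\vec{p}))^{-1}$, and invoke the Hilbert projection theorem for existence and uniqueness of $\rvec{v}_k^\star$. The only cosmetic difference is that the paper packages the identity as a single chain of equalities (using $\inf_{\rvec{v}_k\in\set{V}_k}=\inf_{\rvec{v}_k\in\set{V}_k\setminus\{\vec{0}\}}\inf_{\beta\in\stdset{C}}$) rather than your two-sided squeeze, and it verifies that $\rvec{v}_k^\star$ attains the SINR supremum via the equality $\mathsf{MSE}_k(\rvec{v}_k^\star,\vec{p})=\inf_{\beta}\mathsf{MSE}_k(\beta\rvec{v}_k^\star,\vec{p})$ rather than your inequality argument---but these are equivalent presentations of the same idea.
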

\begin{proof} Similar to \cite{miretti2024duality}, and reported in the appendix.
\end{proof}
\begin{remark} Proposition~\ref{prop:MMSE-SINR} should not be confused with the well-known MMSE-SINR relation for instantaneous rate expressions \cite{shi2011wmmse}, such as  the argument of the expectation in \eqref{eq:oer} or in similar expressions covering noisy CSI  \cite[Sect.~4.1]{massivemimobook}, which applies to centralized beamforming architectures only. In contrast, the relation in Proposition~\ref{prop:MMSE-SINR} holds for the UatF bound \eqref{eq:R}, and it applies to arbitrary beamforming architectures, i.e., to arbitrary information constraints $\set{V}_k$.
\end{remark}

\subsection{Equivalent problem formulation}
Equipped with Proposition~\ref{prop:MMSE-SINR}, we now consider the following optimization problem:
\begin{equation}\label{eq:wmmse}
\begin{aligned}
\underset{\substack{\vec{d}\in \stdset{R}_{++}^K,\;\vec{p}\in \stdset{R}_{+}^K\\ \rvec{v}_1,\ldots,\rvec{v}_K \in \set{H}^{M} }}{\text{maximize}}
& \quad  \sum_{k\in\set{K}}\omega_k\left(\log(d_k)-d_k\mathsf{MSE}_k(\rvec{v}_k,\vec{p})\right) \\
\text{subject to} & \quad (\forall k\in \set{K})~p_k \leq P, \\
 & \quad  (\forall k \in \set{K})~\rvec{v}_k \in \set{V}_k,
\end{aligned}
\end{equation}
where $
\mathsf{MSE}_k(\rvec{v}_k,\vec{p})$ is defined in \eqref{eq:MSE}. The above problem is equivalent to Problem~\eqref{eq:wsr}, in the sense specified next.
\begin{proposition}\label{prop:wmmse}
If a solution $(\vec{d}^\star,\vec{p}^\star,\rvec{v}_1^\star,\ldots,\rvec{v}_K^\star)$ to Problem~\eqref{eq:wmmse} exists, then $(\vec{p}^\star,\rvec{v}_1^\star,\ldots,\rvec{v}_K^\star)$ is also a solution to Problem~\eqref{eq:wsr}.
\end{proposition}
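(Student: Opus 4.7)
The approach is the standard variational argument behind WMMSE-type equivalences, lifted to the UatF/information-constrained setting with Proposition~\ref{prop:MMSE-SINR} providing the one non-obvious bridge from MSE to SINR (and thence to $R_k^{\mathsf{UatF}}$).

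First, I would observe that the objective of \eqref{eq:wmmse} decouples across $k$, each summand being strictly concave in $d_k>0$ with interior maximizer $d_k=1/\mathsf{MSE}_k(\rvec{v}_k,\vec{p})$. Substituting back, the $\vec{d}$-reduced objective is $\sum_k \omega_k(-\log\mathsf{MSE}_k(\rvec{v}_k,\vec{p})-1)$. Note that $\mathsf{MSE}_k(\rvec{v}_k,\vec{p})>0$ for every $\rvec{v}_k\in\set{V}_k$: from \eqref{eq:MSE}, a zero MSE would force both $\E[\|\rvec{v}_k\|^2]=0$ and $\vec{P}^{1/2}\rvec{H}^\herm\rvec{v}_k=\vec{e}_k$ a.s., which is impossible, so the logarithms involved are all finite.

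Next, I would exploit the assumed joint optimality of $(\vec{d}^\star,\vec{p}^\star,\rvec{v}^\star)$. From the $\vec{d}$-reduction, $d_k^\star=1/\mathsf{MSE}_k(\rvec{v}_k^\star,\vec{p}^\star)$. From partial optimality in $\rvec{v}_k$ with $d_k^\star>0$ and $\vec{p}^\star$ fixed, $\rvec{v}_k^\star$ must minimize $\mathsf{MSE}_k(\cdot,\vec{p}^\star)$ over $\set{V}_k$; by Proposition~\ref{prop:MMSE-SINR} it is then also the unique maximizer of $\mathsf{SINR}_k(\cdot,\vec{p}^\star)$ over $\set{V}_k$, and $d_k^\star=1+\mathsf{SINR}_k(\rvec{v}_k^\star,\vec{p}^\star)$. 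Substituting yields that the optimal value of \eqref{eq:wmmse} equals $\sum_k\omega_k R_k^{\mathsf{UatF}}(\rvec{v}_k^\star,\vec{p}^\star)-\sum_k\omega_k$.

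Finally, to transfer optimality back to \eqref{eq:wsr}, I would take an arbitrary competitor $(\vec{p},\rvec{v})$ feasible for \eqref{eq:wsr} and lift it to \eqref{eq:wmmse} via $\tilde{\rvec{v}}_k:=\arg\min_{\rvec{v}_k\in\set{V}_k}\mathsf{MSE}_k(\rvec{v}_k,\vec{p})$ (well-defined by Proposition~\ref{prop:MMSE-SINR}) and $\tilde{d}_k:=1/\mathsf{MSE}_k(\tilde{\rvec{v}}_k,\vec{p})$. Proposition~\ref{prop:MMSE-SINR} gives $R_k^{\mathsf{UatF}}(\rvec{v}_k,\vec{p})\le R_k^{\mathsf{UatF}}(\tilde{\rvec{v}}_k,\vec{p})$, while a direct calculation (as in the previous paragraph) shows that $(\tilde{\vec{d}},\vec{p},\tilde{\rvec{v}})$ is feasible for \eqref{eq:wmmse} with objective equal to $\sum_k\omega_k R_k^{\mathsf{UatF}}(\tilde{\rvec{v}}_k,\vec{p})-\sum_k\omega_k$. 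Optimality of $(\vec{d}^\star,\vec{p}^\star,\rvec{v}^\star)$ for \eqref{eq:wmmse} then forces $\sum_k\omega_k R_k^{\mathsf{UatF}}(\rvec{v}_k,\vec{p})\le\sum_k\omega_k R_k^{\mathsf{UatF}}(\rvec{v}_k^\star,\vec{p}^\star)$, which is the claim. There is no substantive obstacle here: the only delicate points are positivity of $\mathsf{MSE}_k$ and the existence/uniqueness of the inner MMSE minimizer in $\set{V}_k$, both handled above.
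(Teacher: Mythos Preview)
Your proof is correct and follows essentially the same approach as the paper: eliminate $\vec{d}$ via first-order optimality, use partial optimality in $\rvec{v}_k$ together with Proposition~\ref{prop:MMSE-SINR} to identify $d_k^\star=1+\mathsf{SINR}_k(\rvec{v}_k^\star,\vec{p}^\star)$, and conclude that the two optimal values differ by the constant $\sum_k\omega_k$. Your final paragraph makes the transfer-of-optimality step more explicit than the paper (which simply asserts it), but the underlying argument is the same.
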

\begin{proof}
Assume that $(\vec{d}^\star,\vec{p}^\star,\rvec{v}_1^\star,\ldots,\rvec{v}_K^\star)$ is a solution to Problem~\eqref{eq:wmmse}. First, we notice that $(\forall k \in \set{K})~\mathsf{MSE}_k(\rvec{v}_k^\star,\vec{p}^\star)=\inf_{\rvec{v}_k\in \set{V}_k}\mathsf{MSE}_k(\rvec{v}_k,\vec{p}^\star)$ holds by construction.  Since the objective of~\eqref{eq:wmmse} is concave and continuously differentiable in $\vec{d}$, by the first order optimality condition we obtain
\begin{align*}
(\forall k \in \set{K})~d_k^\star = \frac{1}{\mathsf{MSE}_k(\rvec{v}_k^\star,\vec{p}^\star)} &= \frac{1}{\inf_{\rvec{v}_k\in\mathcal{V}_k}\mathsf{MSE}_k(\rvec{v}_k,\vec{p}^\star)} \\
&= 1 + \sup_{\rvec{v}_k\in \set{V}_k} \mathsf{SINR}_k(\rvec{v}_k,\vec{p}^\star) \\
&= 1+ \mathsf{SINR}_k(\rvec{v}_k^\star,\vec{p}^\star)>0,
\end{align*}
where the last two equalities follow from  Proposition~\ref{prop:MMSE-SINR}.
Hence, the optimum of Problem~\eqref{eq:wmmse} can be rewritten as
\begin{equation*}
\sum_{k\in\set{K}}\omega_k\log(1+ \mathsf{SINR}_k(\rvec{v}_k^\star,\vec{p}^\star))-\sum_{k\in\set{K}}\omega_k.
\end{equation*}
This shows that the optimum of Problem~\eqref{eq:wsr} differs from the optimum of Problem~\eqref{eq:wmmse} only by a constant offset $\sum_{k\in\set{K}}\omega_k$, and that it is attained by  $(\vec{p}^\star, \rvec{v}_1^\star,\ldots,\rvec{v}_K^\star)$. 
\end{proof}

\subsection{Block coordinate ascent algorithm}
Following similar arguments as in \cite{shi2011wmmse,demir2021foundations,emil2019lsfd}, we propose to compute a suboptimal solution to  Problem~\eqref{eq:wsr} using a block coordinate ascent algorithm for the equivalent problem~\eqref{eq:wmmse}. The pseudocode of the proposed algorithm is reported at the top of the next page.

\begin{minipage}{\textwidth}
\begin{algorithmic}[1]
\Require $\vec{p} \in \stdset{R}_{+}^K$
\Repeat
\State $(\forall k \in \set{K})~\rvec{v}_k \gets  \arg\min_{\rvec{v}_k\in\set{V}_k}\mathsf{MSE}_k(\rvec{v}_k,\vec{p})$ 
\State $(\forall k \in \set{K})~d_k \gets \frac{1}{\mathsf{MSE}_k(\rvec{v}_k,\vec{p})}$
\State $(\forall k \in \set{K})~p_k = \min\Big\{ \Big(\frac{\omega_kd_k|\E[\rvec{h}_k^\herm\rvec{v}_k]|}{\sum_{j=1}^K\omega_jd_j\E[|\rvec{h}^\herm_k\rvec{v}_j|^2]}\Big)^2,P\Big\}$
\Until{no significant progress is observed.}
\end{algorithmic}
\vspace{0.1cm}
\end{minipage}
Additional details on the proposed algorithm are given below.
\begin{itemize}[leftmargin=*]
\item The update rule for $(\rvec{v}_1,\ldots,\rvec{v}_K)$ involves solving $K$ 
MSE minimization problems under information constraints \cite{miretti2021team}. For the ideal case $\set{V}_k = \set{H}^M$ (fully centralized processing with perfect CSI), the solution is readily given by the familiar MMSE beamformers $\rvec{v}_k = \left(\rvec{H}\vec{P}\rvec{H}^\herm + \vec{I}\right)^{-1}\rvec{h}_k\sqrt{p_k}$. Similar expressions can also be obtained for many other practical cases, as we will see in the simulation section. We remark once more that $\rvec{v}_k$ is a function of the CSI, and not a deterministic vector. Hence, storing $\rvec{v}_k$ during the algorithm execution means storing its long-term parameters (e.g., the regularization factor in a channel inversion block).
\item The update rule for $\vec{d}$ follows from the first order optimality condition as in the proof of Proposition~\ref{prop:wmmse}.
\item The update rule for $\vec{p}$ is based on the problem
\begin{equation*}
\underset{(\forall k \in \set{K})~0\leq p_k \leq P}{\text{minimize}}\sum_{k\in\set{K}}\omega_kd_k\mathsf{MSE}_k(\rvec{v}_k,\vec{p}).
\end{equation*}
It can be shown that, if the beamformers are obtained from a previous MSE minimization step, then the above problem admits a closed-form solution as in \cite[Eq.~(7.9)]{demir2021foundations}.
\end{itemize}

\noindent Our algorithm enjoys the following monotonicity property:
\begin{proposition}
Let $(\vec{d}^{(i)},\vec{p}^{(i)},\rvec{v}_1^{(i)},\ldots,\rvec{v}_K^{(i)})_{i\in \stdset{N}}$ be the sequence of variables produced by the proposed algorithm. Then, the sequence of objectives $(f^{(i)})_{i\in \stdset{N}}$, where 
\begin{equation*}
(\forall i \in \stdset{N})~f^{(i)}\eqdef \sum_{k\in \set{K}}\omega_k\log(1+\mathsf{SINR}_k(\rvec{v}_k^{(i)},\vec{p}^{(i)})),
\end{equation*}
is monotonically increasing and convergent.
\end{proposition}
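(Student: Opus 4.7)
The plan is to view the pseudocode as block coordinate ascent on the WMMSE surrogate~\eqref{eq:wmmse}, exploit the monotonicity intrinsic to such schemes, and transport it back to the SINR-based objective via Proposition~\ref{prop:MMSE-SINR}. Let $g(\vec{d},\vec{p},\rvec{v}_1,\ldots,\rvec{v}_K) \eqdef \sum_{k \in \set{K}} \omega_k\bigl(\log d_k - d_k\,\mathsf{MSE}_k(\rvec{v}_k,\vec{p})\bigr)$ denote the objective of~\eqref{eq:wmmse}.

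First I would verify that each of the three updates is a block-wise maximization of $g$ over its variable under the relevant constraints: the $\rvec{v}$-update separately minimizes $\mathsf{MSE}_k(\cdot,\vec{p})$ over $\set{V}_k$, which maximizes $-d_k\mathsf{MSE}_k$ since $d_k>0$; the $\vec{d}$-update uses the closed form $d_k = 1/\mathsf{MSE}_k(\rvec{v}_k,\vec{p})$ dictated by the first-order optimality condition, exactly as in the proof of Proposition~\ref{prop:wmmse}; and the $\vec{p}$-update is the box-constrained convex minimization of $\sum_k\omega_k d_k\,\mathsf{MSE}_k(\rvec{v}_k,\vec{p})$, whose closed form matches the expression in the pseudocode and in~\cite[Eq.~(7.9)]{demir2021foundations}. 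Since each update can only improve $g$ over its feasible block, the sequence $(g(\vec{d}^{(i)},\vec{p}^{(i)},\rvec{v}^{(i)}))_{i\in\stdset{N}}$ is monotonically non-decreasing.

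Second, I would invoke Proposition~\ref{prop:MMSE-SINR} to identify $f^{(i)}$ with this sequence up to an additive constant. At each iterate $\rvec{v}^{(i)}$ attains $\inf_{\rvec{v}_k\in\set{V}_k}\mathsf{MSE}_k(\rvec{v}_k,\vec{p}^{(i)})$ and $d_k^{(i)}=1/\mathsf{MSE}_k(\rvec{v}^{(i)}_k,\vec{p}^{(i)})$, so Proposition~\ref{prop:MMSE-SINR} yields $d_k^{(i)} = 1 + \mathsf{SINR}_k(\rvec{v}^{(i)}_k,\vec{p}^{(i)})$; substituting into $g$ gives $g(\vec{d}^{(i)},\vec{p}^{(i)},\rvec{v}^{(i)}) = f^{(i)}-\sum_{k}\omega_k$, so $f^{(i)}$ inherits the monotonicity of $g$. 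For convergence it remains to bound $f^{(i)}$ above: two applications of Cauchy--Schwarz give $|\E[\rvec{h}_k^\herm \rvec{v}_k]|^2 \le \E[\|\rvec{h}_k\|^2]\E[\|\rvec{v}_k\|^2]$, and the denominator dominates $\E[\|\rvec{v}_k\|^2]$, so $\mathsf{SINR}_k(\rvec{v}_k,\vec{p})\le p_k\E[\|\rvec{h}_k\|^2]\le P\,\E[\|\rvec{h}_k\|^2]$ uniformly. Hence $f^{(i)}\le \sum_k \omega_k\log(1+P\,\E[\|\rvec{h}_k\|^2])<\infty$, and a monotonically non-decreasing real sequence bounded above converges.

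The main delicate point I anticipate is the index bookkeeping: the pseudocode computes $\rvec{v}^{(i)}$ and $d_k^{(i)}$ from the value of $\vec{p}$ that is in memory \emph{before} the p-update produces $\vec{p}^{(i)}$, so the identity $d_k^{(i)} = 1 + \mathsf{SINR}_k(\rvec{v}^{(i)}_k,\vec{p}^{(i)})$ used in the previous paragraph requires interpreting $(\vec{d}^{(i)},\vec{p}^{(i)},\rvec{v}^{(i)})$ as the state recorded once a complete cycle has restored the MMSE alignment between $(\rvec{v}^{(i)},\vec{d}^{(i)})$ and the current $\vec{p}^{(i)}$, equivalently a cyclic shift of the update order as is standard in the WMMSE literature~\cite{shi2011wmmse}. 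Once this convention is fixed, nothing else in the argument is nontrivial.
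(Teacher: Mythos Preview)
Your proposal is correct and follows exactly the approach the paper indicates, namely reusing the block-coordinate-ascent and MMSE--SINR arguments from Proposition~\ref{prop:wmmse}; the paper in fact omits the proof entirely, so your write-up supplies the details it leaves implicit. In particular, your explicit upper bound on $\mathsf{SINR}_k$ and your remark on the cyclic indexing convention (so that $(\rvec{v}^{(i)},\vec{d}^{(i)})$ are MMSE-aligned with $\vec{p}^{(i)}$) are precisely the points the paper glosses over.
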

\begin{proof}
The proof follows similar arguments as in the proof of Proposition~\ref{prop:wmmse}. It is omitted due to space limitations.
\end{proof}
In fact, the related studies based on \cite{shi2011wmmse} (such as \cite[Sect.~3.4.2]{demir2021foundations} and \cite{emil2019lsfd}) typically advertise convergence not only of the objectives, but also of the variables (to a stationary solution). However, since in this work we are dealing with a functional optimization problem, the notions of convergence and stationarity must be taken with particular care. Hence, we leave additional convergence analysis of our proposed algorithm to an extended version of this study in preparation.  

\begin{figure*}[!ht]
\centering
\subfloat[]{\includegraphics[width=0.33\linewidth]{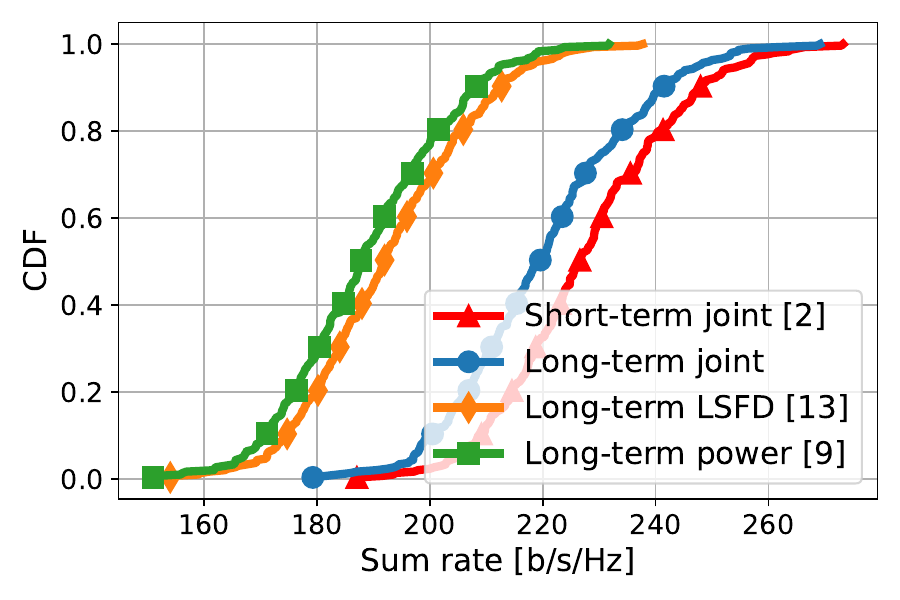}\label{fig:centr}}
\hfil
\subfloat[]{\includegraphics[width=0.33\linewidth]{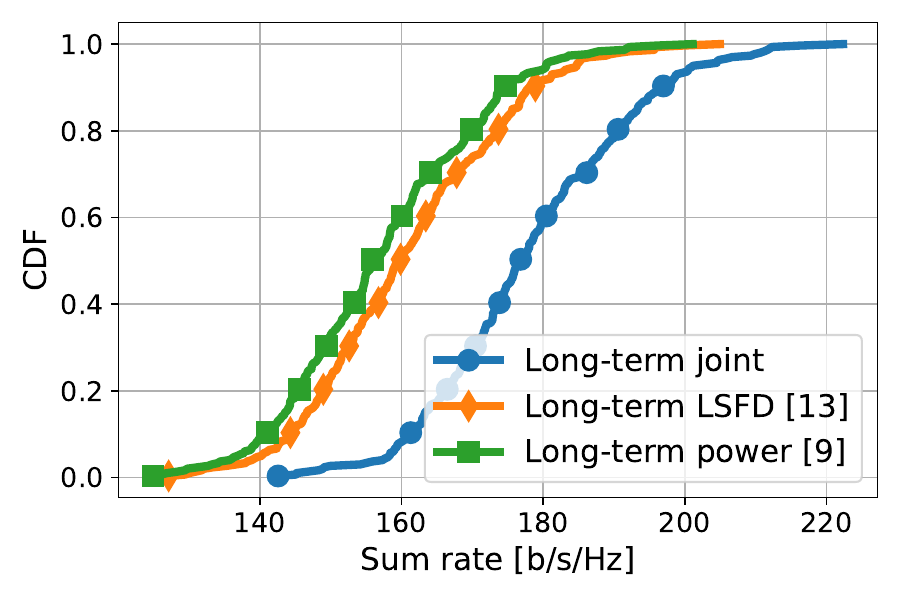}\label{fig:distr}}
\hfil
\subfloat[]{\includegraphics[width=0.33\linewidth]{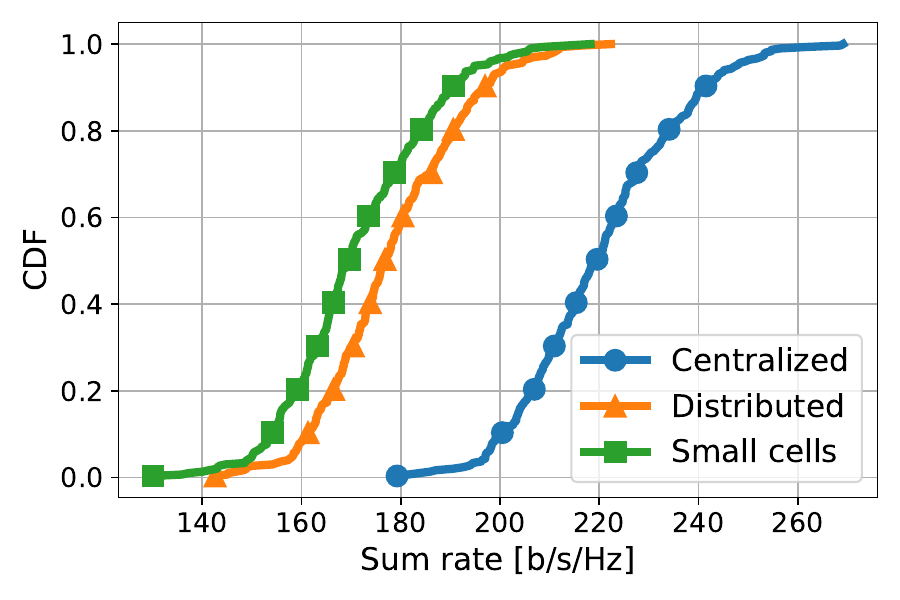}\label{fig:cdf}}
\caption{Comparison of the proposed long-term joint power control and beamforming design method against the competing techniques in \cite{demir2021foundations,emil2019lsfd,shi2011wmmse} for: (a) clustered centralized beamforming (case (i)); and (b) clustered distributed beamforming (case (ii)). Figure (c) compares the three information constraints corresponding to case (i), (ii), and (iii).}
\label{fig:results}
\vspace{-0.5cm}
\end{figure*}
\section{Numerical results and conclusion}
We evaluate numerically the performance of the proposed algorithm over a large MU-MIMO system composed by $L=16$ regularly spaced access points with $N=4$ antennas each, serving in the same time-frequency resource $K=64$ uniformly distributed single-antenna users within a squared service area of size $500\times 500~\text{m}^2$. Note that, in contrast to the standard (cell-free) massive MIMO regime with $M= NL \gg K$, here we consider a critically loaded system with $M \approx K$. We use the same simulation environment as in \cite{miretti2023fixed,miretti2024duality}, which in turn uses the same channel model as in \cite{demir2021foundations} for a system with 2 GHz carrier frequency and 20 MHz bandwidth. 

We focus on  Problem~\eqref{eq:wsr} with unitary weights and a per-user power budget $P=20$~dBm. As information constraints, we consider three common cases as in \cite{ngo2017cell,demir2021foundations,miretti2024duality,miretti2023fixed}: (i)~centralized cell-free network with user-centric clustering; (ii)~distributed cell-free network with user-centric clustering; (iii)~small-cells network. More specifically, we let each user be jointly served by its $Q$ strongest access points, where $Q=4$ for case (i) and (ii), and $Q=1$ for case (iii). We further consider a simple  channel acquisition model that reflects the canonical cell-free massive MIMO implementation with pilot-based local channel estimation, and possible CSI sharing through the fronthaul. In all cases, we neglect for simplicity local estimation errors, and we assume a local CSI model where each access point acquires only the instantaneous channels between itself and its associated users. In case (i), the beamformers are computed based on full CSI sharing within the corresponding user-centric cluster of access points. In case (ii) and (iii), the beamformers are computed locally by the access points based on local CSI only. The corresponding optimal beamformers, i.e., attaining $\inf_{\rvec{v}_k\in\set{V}_k}\mathsf{MSE}_k(\rvec{v}_k,\vec{p})$, are given in closed-form by: (i) the centralized MMSE solution \cite{demir2021foundations} \cite[Proposition~9]{miretti2024duality}; (ii) the local \textit{team} MMSE solution \cite{miretti2021team} \cite[Proposition~11]{miretti2024duality}; and (iii) the local MMSE solution \cite{demir2021foundations}. Due to space limitations, we refer to \cite{miretti2024duality,miretti2023fixed} for additional technical details on the chosen $\set{V}_k$ and corresponding solutions.

In Fig.~\ref{fig:centr} and Fig.~\ref{fig:distr}, we plot the empirical cumulative density function (CDF) of the ergodic sum-rate in \eqref{eq:oer} for $300$ independent user drops, achieved by the proposed algorithm and by: long-term power control \cite{demir2021foundations}; long-term joint power control and LSFD design \cite{emil2019lsfd}; short-term joint power control and beamforming design \cite{shi2011wmmse} (not applicable to case (ii)). For the methods in \cite{demir2021foundations} and \cite{emil2019lsfd}, we fix the beamformers to the aforementioned optimal beamformers assuming full power $\vec{p}=\vec{1}P$, as in \cite{demir2021foundations}. Our main conclusions are: 
\begin{itemize}[leftmargin=*]
\item (Fig.~\ref{fig:centr}) As expected (and in contrast to the results in \cite{miretti2023fixed} for the max-min criterion), the short-term approach \cite{shi2011wmmse} performs best. However, the gains may not justify the higher complexity and signaling overhead, which, in fact, pose significant scalability issues as discussed in the introduction;
\item (Fig.~\ref{fig:centr} and Fig.~\ref{fig:distr}) The proposed long-term method significantly outperforms the competing long-term methods \cite{demir2021foundations,emil2019lsfd}. The reason is that $\vec{p}=\vec{1}P$ is highly suboptimal in a critically loaded system, since maximizing the sum-rate implies that only a limited number of strong users are jointly served in the same time-frequency resource. 
\end{itemize}
Finally, Fig.~\ref{fig:cdf} reports the performance of the proposed method for all the considered information constraints. As expected, centralized cell-free networks may provide much larger sum-rates. Perhaps more surprisingly, and differently than for the significant max-min fairness gains reported in \cite{miretti2023fixed}, distributed cell-free architectures seem to provide marginal sum-rate gains over small cells. We remark that we observed this behavior also under different simulation parameters. 

An informal reason it that, compared to small cells, distributed cell-free networks mostly provide a more uniform SNR, but not much better interference management capabilities. However, uniform SNR is mostly beneficial for cell-edge users, which are usually not scheduled for service when maximizing the sum-rate. Nevertheless, we point out that additional investigation is needed to corroborate the above conclusion, since the proposed algorithm is suboptimal, and since there may be other scenarios giving different results.
	
\appendix[Proof of Proposition~\ref{prop:MMSE-SINR}]
\label{app:MMSE-SINR}
We first observe that $\inf_{\rvec{v}_k\in\set{V}_k} \mathsf{MSE}_k(\rvec{v}_k,\vec{p})$  
\begin{align*}
			&\overset{(a)}{=} \inf_{\rvec{v}_k\in\set{V}_k\backslash \{  \vec{0} \}} \inf_{\beta \in \stdset{C}} \mathsf{MSE}_k(\beta\rvec{v}_k,\vec{p}) \\
			&\overset{(b)}{=} \inf_{\rvec{v}_k\in\set{V}_k\backslash \{  \vec{0} \}} 1-\dfrac{p_k|\E[\rvec{h}_k^\herm\rvec{v}_k]|^2}{\sum_{j\in\set{K}}p_j\E[|\rvec{h}_j^\herm\rvec{v}_k|^2]+\E\left[\|\rvec{v}_k\|^2\right]} \\
			&= \dfrac{1}{1+\sup_{\rvec{v}_k\in\set{V}_k\backslash \{  \vec{0} \}}\mathsf{SINR}_k(\rvec{v}_k,\vec{p})} \\
&\overset{(c)}{=} \dfrac{1}{1+\max\bigg\{\underset{\rvec{v}_k\in \{\vec{0}\}}{\sup}\mathsf{SINR}_k(\rvec{v}_k,\vec{p}),\underset{\rvec{v}_k\in \set{V}_k\backslash\{\vec{0}\}}{\sup}\mathsf{SINR}_k(\rvec{v}_k,\vec{p})\bigg\}} \\
&\overset{(d)}{=} \dfrac{1}{1+\sup_{\rvec{v}_k\in\set{V}_k}\mathsf{SINR}_k(\rvec{v}_k,\vec{p})} 
\end{align*}
where $(a)$ follows from $(\forall \beta\in \stdset{C})(\forall \rvec{v}_k \in \mathcal{V}_k)$ $\beta \rvec{v}_k \in\mathcal{V}_k$, $(b)$ follows from standard minimization of scalar quadratic forms, $(c)$ follows from the nonnegativity of the SINR, and $(d)$ follows from an elementary identity on the supremum of the union of bounded sets. Existence and uniqueness of $\rvec{v}_k^\star \in \set{V}_k$ attaining $\inf_{\rvec{v}_k\in \set{V}_k}\mathsf{MSE}_k(\rvec{v}_k,\vec{p})$ follows from the Hilbert projection theorem  as in \cite[Lemma~4]{miretti2021team}. It remains to verify that this $\rvec{v}_k^\star$ also attains $\sup_{\rvec{v}_k\in \set{V}_k} \mathsf{SINR}_k(\rvec{v}_k,\vec{p})$. If $\rvec{v}_k^\star \neq \vec{0}$ a.s., this can be verified by following similar steps as in the above chain of equalities: $
\mathsf{MSE}_k(\rvec{v}_k^\star,\vec{p}) =  \inf_{\beta \in \stdset{C}} \mathsf{MSE}_k(\beta\rvec{v}_k^\star,\vec{p}) = (1+\mathsf{SINR}_k(\rvec{v}_k^\star,\vec{p}))^{-1}$. Finally, the particular case $\rvec{v}_k^\star = \vec{0}$ a.s. can be verified by direct inspection.  

\bibliographystyle{IEEEbib}
\bibliography{IEEEabrv,refs}

\end{document}